\documentclass[runningheads, envcountsame, a4paper]{llncs}
\usepackage{amsfonts,amssymb,amsmath,bbm}
\usepackage[T2A]{fontenc}
\usepackage[utf8]{inputenc}        

\usepackage{microtype}

\usepackage[hyphens]{url}
\usepackage{verbatim}

\usepackage{graphicx}
\def\mpfile#1#2{\includegraphics{#1#2.pdf}}

\def\uu{\mathbbm{1}}

\let\leq\leqslant
\let\geq\geqslant
\let\sm\setminus

\def\poly{\mathop{\mathrm{poly}}\nolimits}  
\def\bin{\mathop{\mathrm{bin}}\nolimits}

\let\es\varnothing

\def\ZZ{\mathbb Z}
\def\NN{\mathbb N}

\let\epsilon\varepsilon

\let\ph\varphi

\let\Ld\Lambda

\def\A{{\cal A}}

\def\D{{\cal D}}

\def\cH{{\cal H}}

\def\0{\mathsf{false}}
\def\1{\mathsf{true}}


\def\NP{{\mathbf{NP}}}
\def\PSPACE{{\mathbf{PSPACE}}}
\newcommand*\DTIME{\ensuremath{\mathbf {DTIME}}}
\def\NIM{\mathrm{NIM}}

\title{Computational Hardness of Multidimensional Subtraction Games\thanks{The study
    has been funded by the Russian Academic Excellence Project
    '5-100'. The second author was supported in part by RFBR grant 20-01-00645 and the state assignment topic no. 0063-2016-0003.
}}

\author{V. Gurvich\inst{1}\inst{4}
\and
M. Vyalyi\inst{3}\inst{1}\inst{2}}
\institute{
 National Research University Higher School of Economics\\
\and Moscow Institute of Physics and Technology\\
\and Dorodnicyn Computing Centre, FRC CSC RAS\\
\and Rutgers University\\
\email{\{vladimir.gurvich, vyalyi\}@gmail.com}
}

\def\PW{\mathcal{P}}

\usepackage{mathrsfs}

\usepackage[normalem]{ulem}
\usepackage{color}
\usepackage{xcolor}

\usepackage{enumerate}

\usepackage{marginnote}

\begin{document}

\maketitle

\begin{abstract}
We study algorithmic complexity of solving subtraction games in a~fixed dimension with a finite difference set. We prove that there exists
a~game in this class such that any algorithm solving the game runs
in exponential time. Also we prove an existence of a game in this
class such that solving the game is PSPACE-hard.

The results are based on the construction introduced by  Larsson and
W\"astlund. It  relates subtraction games and cellular automata.
\keywords{subtraction games, cellular automata, computational hardness}
\end{abstract} 
	  
\section{Introduction}\label{Intro}	       

An algorithmic complexity of solving combinatorial games is an important area
of research. 
There are  famous
games which can be solved efficiently. The most important one is
nim. The game was introduced by Bouton~\cite{Bouton02}. It can be
solved efficiently by using the
theorem on Sprague-Grundy function for 
a~disjunctive compound (or, for brevity, sum) of games
(see~\cite{BerlekampConwayGuy,Conway,GrundySmith}).
There are several generalizations of nim solved by efficient
algorithms: the Wythoff nim~\cite{Wythoff,Fraenkel84}, the Fraenkel's
game~\cite{Fraenkel82,Fraenkel84}, the nim$(a,b)$
game~\cite{BorosGurvichOudalov13}, the Moore's
nim~\cite{Moore,JenkynsMayberry80,Boros-etc19a}, the exact $(n,k)$-nim with $2k\geq
n$~\cite{Boros-etc15,Boros-etc19a}. 

There are `slow' versions for both,
Moore's and  exact  nim~\cite{GurvichHo15,GHHC}. 
In a slow version a player can take
at most one pebble from a~heap. 
In~\cite{GurvichHo15}
P-positions of exact slow $(3,2)$-nim were described. In~\cite{GHHC} the
$(4,2)$-case was solved. 


Note that for many values of parameters the exact $(n,k)$-nim is not
solved yet and the set of P-positions  looks rather
complicated. The simplest example is  the exact $(5,2)$-nim.
Slow $(5,2)$ version of exact nim reveals a~similar behavior.

So, it was conjectured that there are no efficient algorithms
solving these variants of nim. Now we have no clues how to prove this
conjecture.

Looking for hardness results in solving combinatorial games, we see
numerous examples of $\PSPACE$-complete games, 
e.g.~\cite{Schaefer78,DemaineHearn08}. 

For nim-like games, there are results on hardness of the
\emph{hypergraph nim}. 
Given a set  $[n] = \{1, \ldots, n\}$  and an arbitrary hypergraph
$\cH \subseteq 2^{[n]} \setminus \{\es\}$  on the ground   set  $[n]$,
the game {\em hypergraph nim} $\NIM_\cH$ is played as follows.
By one move a~player chooses an edge  $H \in \cH$  and reduces
(strictly) all heaps of   $H$.
Obviously, the games of standard,  exact  and  Moore's nim considered
above  are special cases of the hypergraph nim.
For a position  $x = (x_1, \ldots, x_n)$  of $\NIM_\cH$
its {\em height}  $h(x) = h_\cH(x)$ is defined as
the maximum number of successive moves that can be made from  $x$.
A~hypergraph  $\cH$  is called {\em intersecting} if  $H' \cap H'' \ne
\varnothing$
for any two edges  $H', H'' \in \cH$.
The following two statements were proven in~\cite{Boros-etc17,Boros-etc19b}.
For any intersecting hypergraph $\cH$,  its height and  SG function are equal.
Computing the height  $h_\cH(x)$ is $\NP$-complete  already
for the intersecting hypergraphs with edges of size at most  4.
Obviously, these two statements imply that,
for the above family of hypergraphs.
computing the SG function is $\NP$-complete too.


Note that all  hardness   results mentioned above were
established for games in unbounded dimension (the number of heaps is a
part of an input).

For a fixed dimension, there is a very important result of Larsson and
W\"astlund~\cite{LarssonWastlund13}. They studied a wider class of
games, so-called  \emph{vector subtraction games}. These games were
introduced by Golomb~\cite{Golomb66}. Later they were studied under
a~different name---invariant games~\cite{DucheneRigo10}. Subtraction games include all versions of nim mentioned above. 
In these games, the positions  are $d$-dimensional vectors with
nonnegative integer coordinates. The game is specified by a~set
of $d$-dimensional integer vectors (the difference set) and  a~possible
move is a~subtraction of a~vector from the difference set. 
Larsson and W\"astlund considered subtraction games of finite dimension
with a finite difference set (MSG for brevity). 

P-positions of  a 1-dimensional MSG
form a periodic structure~\cite{CGTbook}. It gives an
efficient algorithm to solve such a~game.

In higher dimensions the MSG
behave in a~very complicated way. Larsson and
W\"astlund proved in~\cite{LarssonWastlund13} that in some fixed
dimension the equivalence problem for MSG 
is undecidable.

Nevertheless, this remarkable result does not answer the major question
about efficient algorithms solving MSG.
For example, there are polynomial time
algorithms solving the membership problem for CFL but the equivalence
problem for CFL is undecidable~\cite{HoMoUl}.

In this paper we extend arguments of Larsson and W\"astlund and prove
an existence of a MSG such that 
any algorithm
solving the game runs in exponential time. For this result we need no
complexity-theoretic conjectures and derive it from the hierarchy
theorem. Also, we prove by similar arguments an existence of a~MSG
such that solving the game is $\PSPACE$-hard.  The latter
result is not an immediate corollary of the former.  It is quite
possible that a~language $L$ is recognizable only in exponential time
but $L$ is not $\PSPACE$-hard.

The rest of the paper is organized as follows. In
Section~\ref{sect:basic} we introduce all concepts used and present
the main results. In Section~\ref{sect:outline} we outline main ideas
of the proofs. The following sections contain a~more detailed
exposition of major steps of the proofs: in Section~\ref{sect:2CA->MSG}
we describe a~simulation of a binary cellular automaton by a subtraction
game; Section~\ref{sect:TM->CA} contains a
discussion of converting a Turing machine to a binary cellular
automaton; Section~\ref{sec:parallel} presents a way to launch a
Turing machine on all inputs simultaneously.
Finally, Section~\ref{sec:proofs} contains the proofs of main results.

\section{Concepts and Results}\label{sect:basic}

\subsection{Impartial Games}

An \emph{impartial} game of two players is determined by a finite set
of \emph{positions}, by the indicated \emph{initial} position and by a  set
of possible moves. Positions and possible moves form vertices and
edges of a directed graph. 
All games considered in this paper are impartial. Also, we always
assume that the graph of a game is DAG. Therefore, each play
terminates after a finite number of moves. 

Here we restrict our attention to a \emph{normal winning condition}: the
player unable to make a move loses. 

Recall the standard classification of positions of an impartial game. If
a player who moves at a position $x$ has a winning strategy in a game
starting at the position $x$, then the position is called
N-\emph{position}. Otherwise, the position is called P-\emph{position}. 
Taking in mind the relation with Sprague-Grundy function, we
assign to a~P-position the (Boolean) value~$0$ and to an N-position the
(Boolean) value~$1$. The basic relation between values of positions  is
  \begin{equation}\label{val-eq}
    p(v) = \lnot \bigwedge_{i=1}^n p(v_i) = \bigvee_{i=1}^n \lnot
    p_(v_i)
    = [p(v_1),\dots, p(v_n)], 
  \end{equation}
where the possible moves from the position $v$ are to the positions
$v_1$, $\dots$, $v_n$. 

Using Eq.~\eqref{val-eq}, it is easy to find values for all positions
of a game in time polynomial in the number of positions. We are
interested in solving games presented by a~succinct description. So, the
number of positions is typically huge and this straightforward
algorithm to solve a~game appears to be unsatisfactory.

\subsection{Subtraction Games and Modular Games}

Now we introduce a class MSG of subtraction games. 
A  game from this class is completely specified by a
finite set
$D$ of $d$-dimensional vectors (the \emph{difference set}). We assume
that coordinates of each vector $a\in D$ are integer and their sum is
positive:
\[
\sum_{i=1}^d a_i >0.
\]

A~position of the game is  a $d$-dimensional vector
$x=(x_1,\dots,x_d)$ with non-negative integer coordinates (informally,
they are the numbers of pebbles in the heaps). A~move from
the position $x$  to a~position $y$ is possible if $x-y\in
D$. If a~player is unable to make a move, then she loses.

\begin{example}
  The exact slow $(n,k)$-nim~\cite{GurvichHo15} is an $n$-dimensional
  subtraction game with the difference set consisting of all
  $(0,1)$-vectors with exactly $k$ coordinates equal~$1$.
\end{example}

Any subtraction game can be considered as a generalization of
this example.  In general case we allow to add
pebbles to the heaps. But the total number of pebbles should diminish
at each move (the positivity condition above). It guarantees that each
play of a MSG terminates after finite number of moves.

If a difference set is a part of an input, then it is easy to see that  solving of MSG  is  $\PSPACE$-hard. To
show $\PSPACE$-hardness we reduce solving of the game NODE KAYLES
to  solving a~MSG.
Recall the rules of the game NODE KAYLES. It is played on a graph
$G$. At each move a player puts a pebble on an unoccupied vertex of
the graph which is non-adjacent to any occupied vertex. The player
unable to make a move loses.
It is known that  solving NODE KAYLES is 
$\PSPACE$-complete~\cite{Schaefer78}. So, $\PSPACE$-hardness
of solving MSG is an immediate corollary of the following proposition.

\begin{proposition}\label{NodeKayles->MSG}
  Solving of NODE KAYLES is reducible to  solving of MSG. 
\end{proposition}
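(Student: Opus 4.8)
The plan is to build, from a graph $G=(V,E)$ with $V=\{1,\dots,n\}$, an $n$-dimensional subtraction game whose game tree, started from a suitable initial position, is isomorphic to the game tree of NODE KAYLES on $G$. Recall that a position of NODE KAYLES is just the set $S\subseteq V$ of occupied vertices, which is always independent, and that a legal move adds to $S$ a vertex $v$ that is \emph{free}, meaning $v\notin S$ and $v$ has no neighbor in $S$. I would use one coordinate per vertex and read off both occupancy and blocking directly from the coordinate values.

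Concretely, I take the initial position $x^0=(n,\dots,n)$ and the difference set $D=\{a^{(v)}:v\in V\}$, where $a^{(v)}$ subtracts $n$ from coordinate $v$ and subtracts $1$ from coordinate $u$ for every neighbor $u$ of $v$ (and $0$ elsewhere). Each $a^{(v)}$ has positive coordinate sum $n+\deg(v)$, so the positivity requirement holds, and $D$ is plainly computable from $G$ in polynomial time. The intended invariant, maintained along every play, is that after the vertices of an independent set $S$ have been played the coordinates are $x_v=0$ for $v\in S$ and $x_v=n-|\Gamma(v)\cap S|$ for $v\notin S$, where $\Gamma(v)$ is the neighborhood of $v$; in particular $x_v=0$ holds iff $v\in S$, since $|\Gamma(v)\cap S|\le n-1<n$.

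The core of the argument is to check that $a^{(v)}$ is subtractable from such a position if and only if $v$ is free. This is where the encoding does its work: subtracting $n$ from coordinate $v$ is legal only when $x_v\ge n$, and since $n$ is the maximal value any coordinate attains, this forces $x_v=n$, i.e.\ $v$ has not been played and none of its neighbors has been played---exactly the freeness condition. The complementary non-negativity constraints $x_u\ge 1$ on the neighbors are then automatically satisfied, because $x_v=n$ guarantees every neighbor of $v$ is unoccupied and hence has value at least $n-(n-1)=1$. Playing $a^{(v)}$ drops $x_v$ to $0$ and decrements each neighbor's coordinate by one, which is precisely the update preserving the invariant.

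Granting this equivalence, the map $S\mapsto x(S)$ is a bijection between the positions reachable in NODE KAYLES and those reachable from $x^0$, and it carries legal moves to legal moves, so the two game trees are isomorphic; consequently the initial NODE KAYLES position is a P-position exactly when $x^0$ is. Since the construction is polynomial and $\PSPACE$-completeness of NODE KAYLES is known~\cite{Schaefer78}, this gives the claimed reduction. The main obstacle I anticipate is the one resolved above: a single difference vector can only subtract fixed amounts, whereas ``blocking'' a vertex is a saturating operation, so the reduction hinges on choosing the threshold $n$ so that the non-negativity test alone enforces the independent-set rule. Verifying that no coordinate is ever driven negative and that the legality test matches freeness on \emph{every} reachable position is the step demanding the most care.
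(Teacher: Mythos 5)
Your proof is correct, but it encodes the graph differently from the paper. The paper indexes coordinates by the \emph{edges} of $G$: the game is $|E|$-dimensional, the difference vector $a^{(v)}$ is the $(0,1)$-incidence vector of the edges at $v$, and the initial position is the all-ones vector $\uu$. Playing $v$ zeroes the coordinates of all edges at $v$, so a neighbor $u$ can never be played afterwards (the shared edge coordinate would go negative), and the move-by-move isomorphism with NODE KAYLES is immediate; positions stay $(0,1)$-vectors throughout. You instead index coordinates by the \emph{vertices}, start from $(n,\dots,n)$, and enforce the independence rule by a threshold: playing $v$ requires subtracting the full amount $n$ from $x_v$, and any previously played neighbor has already pushed $x_v$ below $n$. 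Your verification is sound: coordinates never exceed $n$, so $x_v\ge n$ forces $x_v=n$, which under your invariant is exactly freeness of $v$, and freeness in turn guarantees every neighbor's coordinate is at least $1$, so no coordinate goes negative. What the two constructions buy: the paper's is easier to check (legality is a purely local $0/1$ test and no invariant bookkeeping is needed), while yours uses dimension $|V|$ instead of $|E|$ at the price of entries of magnitude $n$ and a slightly more delicate invariant. A small additional merit of your version: for an isolated vertex $v$ the paper's $a^{(v)}$ is the zero vector, which violates the positivity requirement $\sum_i a_i>0$ imposed on difference sets (so the paper implicitly assumes no isolated vertices), whereas your $a^{(v)}$ always has coordinate sum $n+\deg(v)\ge n>0$.
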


\begin{proof}
  Let $G=(V,E)$ be the graph of NODE KAYELS. Construct a
  $|E|$-dimensional subtraction game with the 
  difference set $A_G$ indexed by the vertices of~$G$: $D=\{a^{(v)}: v\in
  V\}$, where
  \[
  a^{(v)}_{e}= \left\{
  \begin{aligned}
    1,&&&\text{the vertex $v$ is incident to the edge $e$},\\
    0,&&&\text{otherwise.}
  \end{aligned}
  \right.
  \]
  We assume in the definition that the coordinates are indexed by the
  edges of the graph $G$.

  Take a position $\uu$ with all coordinates equal~$1$. We are going
  to prove that this position is a P-position of the MSG $A_G$ iff the
  graph $G$ is a P-position of NODE KAYLES.

  Indeed, after subtracting a vector $a^{(v)}$, coordinates indexed by
  the edges incident to~$v$ are zero. It means that after this move it
  is impossible to subtract vectors $a^{(v)}$ and $a^{(u)}$, where
  $(u,v)\in E$.

  On the other hand, if the current position is 
  \[
  \uu - \sum_{v\in X} a^{(v)}
  \]
  and there are no edges between a vertex $u$ and the vertices of the
  set $X$, then the subtraction of the vector $a^{(u)}$ is a legal
  move at this position.

  Thus, the subtraction game starting from the position $\uu$ is
  isomorphic to the game NODE KAYLES on the graph $G$.
\qed\end{proof}

In the sequel we are interested in solving of a~particular MSG (the
difference set is fixed). In other words, we are
going to determine algorithmic complexity of the language 
$\PW(D)$ consisting of binary representations of all P-positions
$(x_1,\dots,x_d)$ of the MSG with the difference set $D$.

Our main result is unconditional hardness of this problem.

\begin{theorem}\label{th:main}
  There exist a constant $d$ and a finite set $\D\subset \NN^{d}$ such that
  any algorithm 
  recognizing the language $\PW(\D)$ runs in time  $\Omega(2^{n/11})$,
  where $n$ is the input length.
\end{theorem}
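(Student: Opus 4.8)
The plan is to derive the bound from the deterministic time hierarchy theorem, by embedding a provably hard language into the P-positions of a single fixed MSG through the reduction chain announced in the introduction: Turing machine $\to$ binary cellular automaton $\to$ MSG. First I would invoke the hierarchy theorem to fix a language $L$ together with a Turing machine $M$ deciding it, such that $M$ halts within $2^{O(m)}$ steps on every input of length $m$, while $L\notin\DTIME(2^m)$. The whole point of the construction is that a single membership query ``$x\in\PW(\D)$?'', for a position $x$ of binary length $n$, will decode the outcome of a computation that runs for up to $2^{\Theta(n)}$ steps; this exponential gap between the length of the query and the length of the simulated computation is precisely what turns the linear separation of the hierarchy theorem into an exponential lower bound.

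Next I would perform the three simulations in sequence. Using Section~\ref{sect:TM->CA}, I convert $M$ into a binary cellular automaton whose space-time diagram records the successive configurations of $M$. Then, applying the parallel launch of Section~\ref{sec:parallel}, I arrange the automaton to run $M$ on all inputs at once: the spatial coordinate indexes the input word and the temporal coordinate counts steps, so that the cell value at a space-time point $(\mathrm{space},\mathrm{time})$ encodes the state of $M$ on the corresponding input after the corresponding number of steps. Finally, by the simulation of Section~\ref{sect:2CA->MSG}, I obtain a fixed dimension $d$ and a fixed difference set $\D\subset\NN^d$ whose MSG has the property that the P/N-value of a position, computed through the recurrence~\eqref{val-eq}, equals the automaton's cell value at the encoded space-time point. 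A point with coordinates bounded by $2^{O(m)}$ corresponds to an MSG position whose binary representation has length $O(m)$.

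The reduction then reads as follows. To decide whether a word $w$ with $|w|=m$ belongs to $L$, I compute in time $\poly(m)$ the space-time coordinates of the cell that records the halting verdict of $M$ on $w$; since $M$ runs for at most $2^{O(m)}$ steps, these coordinates are at most $2^{O(m)}$, so the associated MSG position has binary length $n$, and the constants in the three simulations can be tracked so that $n\le 11m$. A single test of membership in $\PW(\D)$ for this position then reveals whether $M$ accepts $w$. Hence any algorithm recognizing $\PW(\D)$ in time $t(n)$ yields an algorithm deciding $L$ in time $\poly(m)+t(n)$. If $t(n)=o(2^{n/11})$, then, because $n/11\le m$, the bound also gives $t(n)=o(2^m)$ on these instances, so $L$ would be decided in time $o(2^m)+\poly(m)$, placing $L\in\DTIME(2^m)$ and contradicting the hierarchy theorem. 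Therefore every algorithm for $\PW(\D)$ runs in time $\Omega(2^{n/11})$.

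The main obstacle is bookkeeping the constant-factor blowups along the whole chain so that the exponent comes out to the explicit value $1/11$. Encoding the alphabet and head of $M$ in binary, widening the automaton to host all inputs in parallel, and translating cell values into P-positions each inflate the coordinate magnitudes by a controlled factor, and these factors accumulate additively in the exponent of $m$. The most delicate point is the faithfulness of the CA-to-MSG simulation of Section~\ref{sect:2CA->MSG}: one must verify that the value $p(x)$ assigned to each MSG position by~\eqref{val-eq} really agrees with the intended cell value, so that the membership query decodes the computation of $M$ rather than some spurious artifact of the game dynamics.
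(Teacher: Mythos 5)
Your proposal is correct and follows essentially the same route as the paper's own proof: fix a hard language via the time hierarchy theorem, simulate its machine $M$ on all inputs in parallel (Section~\ref{sec:parallel}), convert to a binary cellular automaton (Section~\ref{sect:TM->CA}), embed that automaton into a fixed-dimension MSG (Section~\ref{sect:2CA->MSG}), and turn the resulting polynomial-time map $w\mapsto (Nt+u,Nt-u,0,\dots,0,1)$ into a contradiction with the hierarchy bound. The only substantive difference is where the factor $11$ is paid for: the paper takes $L\in\DTIME(2^n)\sm\DTIME(2^{n/2})$ with $T(n)=C2^n$, so the slack sits in the hardness exponent $1/2$, whereas your instantiation ($L\notin\DTIME(2^m)$ with runtime $2^{O(m)}$) only yields the claimed bound $n\le 11m$ if the runtime exponent $c$ is chosen strictly between $1$ and $11/6$ (the parallel simulation roughly triples the exponent and the two large coordinates double it again, giving $n\approx 6cm$), which is attainable but should be said explicitly rather than hidden in ``constants can be tracked.''
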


Also, we  show that there are $\PSPACE$-hard languages $\PW(D)$.

\begin{theorem}\label{th:aux}
    There exist a constant $d$ and a finite set $\D\subset
    \NN^{d}$ such that the language
    $\PW(\D)$ is $\PSPACE$-hard.
\end{theorem}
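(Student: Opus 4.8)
The plan is to prove Theorem \ref{th:aux} by following the same simulation pipeline announced in the introduction, namely the chain from Turing machines to binary cellular automata to multidimensional subtraction games, but feeding a $\PSPACE$-complete language into it rather than an exponential-time language. Concretely, I would fix a $\PSPACE$-complete language $L$ (for instance, a standard $\PSPACE$-complete problem such as QBF, or the NODE KAYLES language already discussed above) and a single-tape Turing machine $M$ deciding $L$ in space polynomial in the input length. The goal is to build one fixed difference set $\D\subset\NN^{d}$, with $d$ a constant, so that deciding membership in $\PW(\D)$ lets one decide membership in $L$ through a polynomial-time reduction.

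**The key steps, in order.** First I would invoke the simulation of a binary cellular automaton by a subtraction game (Section \ref{sect:2CA->MSG}): this yields a fixed $\D$ whose P-position language encodes the space–time diagram of a given cellular automaton $\CA$, so that reading off a coordinate of $\PW(\D)$ recovers a designated cell value at a designated time. Second, I would use the Turing-machine-to-cellular-automaton conversion (Section \ref{sect:TM->CA}) to realise $M$ as such a $\CA$; because $M$ runs in polynomial space, the relevant portion of the space–time diagram has polynomially bounded spatial extent, and the acceptance of an input $w$ is read off from the diagram after at most exponentially many steps. Third, and this is the crux of the reduction, I would take an instance $w$ of $L$ of length $\ell$, encode its initial configuration into a single MSG position $x(w)$ whose coordinates are integers of bit-length polynomial in $\ell$, and arrange the construction so that $x(w)\in\PW(\D)$ iff $M$ accepts $w$. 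The map $w\mapsto$ (binary representation of $x(w)$) must be computable in time polynomial in $\ell$, which is exactly the requirement for a polynomial-time many-one reduction $L\leq_{\mathrm m}\PW(\D)$, establishing $\PSPACE$-hardness.

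**The main obstacle** is that the positions of the game are addressed by their \emph{numerical} coordinates, while the cellular automaton unfolds over \emph{time}. In the construction underlying Theorem \ref{th:main} the time axis is linked to the magnitude of a coordinate, so that simulating $T$ steps forces coordinates of magnitude $\sim T$, i.e.\ bit-length $\sim\log T$. For a polynomial-space machine the natural simulation runs for exponentially many steps, which would make the encoded coordinate have polynomial bit-length---good for a polynomial reduction---but only if the required \emph{spatial} window and the \emph{query} coordinate can both be expressed with polynomially many bits. The delicate point is therefore to ensure that the particular coordinate whose P/N value we query, together with all coordinates fixed in $x(w)$, have binary representations of length polynomial in $\ell$, even though the simulation itself is exponentially long in time. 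This is precisely why, as the introduction warns, Theorem \ref{th:aux} is \emph{not} an immediate corollary of Theorem \ref{th:main}: the exponential-time lower bound follows from the hierarchy theorem applied to the diagonal structure of $\PW(\D)$, whereas $\PSPACE$-hardness needs a genuine, length-preserving-up-to-polynomial reduction from a complete problem.

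**To finish**, once the reduction $w\mapsto x(w)$ is shown to be polynomial-time computable and correct (accepting instances map to P-positions and rejecting instances to N-positions), $\PSPACE$-hardness of $\PW(\D)$ is immediate: any polynomial-time algorithm for $\PW(\D)$ would decide $L$ in polynomial time, and more generally every language in $\PSPACE$ reduces to $L$ and hence, by composition of polynomial-time reductions, to $\PW(\D)$. The only genuine work beyond the shared machinery of Sections \ref{sect:2CA->MSG}--\ref{sect:TM->CA} is the bookkeeping that keeps all queried coordinates of polynomial bit-length; the $\PSPACE$ upper-bound side, if desired for completeness rather than hardness, would follow from observing that evaluating a position of a fixed MSG can be done by a space-bounded recursive search over the polynomially deep subtraction DAG.
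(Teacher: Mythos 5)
Your high-level motivation is correct (the queried position must have coordinates whose \emph{binary} representations are of polynomial length, even though the simulation runs for exponentially many steps), but the step you yourself call the crux cannot be realized by the machinery you invoke, and it is exactly the point where the paper does something you have omitted. In the simulation of Sections~\ref{sect:2CA->MSG}--\ref{sect:TM->CA}, the game values encode the space--time diagram of a cellular automaton evolving from one \emph{fixed} initial configuration $\dots11011\dots$: by Corollary~\ref{CA->SG}, $c(t,u)=p(Nt+u,Nt-u,0,\dots,0,1)$, where the time-zero configuration corresponds to the corner P-position $(0,0)$ together with the dummy boundary values. Choosing which position of $\PW(\D)$ to query only selects a point $(t,u)$ of this single, hard-wired diagram; it gives no way to ``encode the initial configuration of $M$ on $w$ into a single MSG position $x(w)$,'' because the tape contents at time $0$ are the same no matter which position is queried. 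So your statement ``arrange the construction so that $x(w)\in\PW(\D)$ iff $M$ accepts $w$'' has no mechanism behind it: the input $w$ cannot enter the computation through the initial configuration.

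The paper closes exactly this gap with the parallel-execution machine of Section~\ref{sec:parallel}: a single machine $U$, started on the fixed initial configuration, simulates $M$ on \emph{all} inputs $w_1,w_2,\dots$ inside one computation, maintaining for each $w_i$ a zone whose result block eventually carries the accept/reject bit of $M$ on $w_i$. The reduction for Theorem~\ref{th:aux} then maps $w$ to the position $(Nt+u,Nt-u,0,\dots,0,1)$, where $t=2^{4n^k}$ and $u$ is the address of the result bit of the zone of $w$; Proposition~\ref{calculations} shows $u$ is computable in polynomial time (it is exponentially large, but has polynomially many bits), and Proposition~\ref{goodU} supplies the timing guarantees (the result is ready before time $2^{4n^k}$, and the head of $U$ is away from result blocks at that moment) that make the queried game value equal to the acceptance of $M$ on $w$. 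Note also that your remark that polynomial space gives a ``polynomially bounded spatial extent'' is off: in the actual construction the zone of $w$ sits exponentially far from the origin; what saves the reduction is only that its address is polynomial-time computable. Without the parallel-simulation idea (or some substitute that lets the input enter through the queried position rather than through the initial configuration), your proof does not go through.
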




In the proofs we need a generalization of MSG---so-called $k$-modular
MSG introduced in~\cite{LarssonWastlund13}. A $k$-modular
$d$-dimensional MSG is determined by $k$ finite sets $D_0, \dots,
D_{k-1}$ of vectors in $\ZZ^{d}$. The rules are similar to the rules
of MSG. But the possible moves at a position $x$ are specified by the
set $D_r$, where $r$ is the residue of $\sum_i x_i$ modulo $k$.

\subsection{Turing Machines and Cellular Automata}

A notion of a Turing machine is commonly known.
We adopt the definition of Turing machines  from Sipser's book~\cite{Sipser}.

Cellular automata are also well-known. But we prefer to provide the
definitions for them. 

Formally, a cellular automaton (CA) $C$ is  a pair
$(A,\delta)$, where $A$ is a finite set (the \emph{alphabet}), and
$\delta\colon A^{2r+1}\to A$ is the \emph{transition function}.
The number  $r$  is called
\emph{the size of a~neighborhood}.
The automaton operates on an infinite tape consisting of
\emph{cells}. Each cell carries a symbol from the alphabet. Thus,
a~\emph{configuration} of $C$ is a~function $c\colon \ZZ\to A$. 

At each step CA changes the content of the tape using the transition
function. If a~configuration before the step is $c$, then the
configuration after the step is $c'$, where
\[
c'(u) = \delta\big(c(u-r), c(u-r+1),\dots, c(u), \dots, c(u+r-1),
c(u+r)\big). 
\]
Note that changes are local: the content of a cell depends on the content
of  $2r+1$ cells in the neighborhood of the cell.

We assume that there exists a blank symbol $\Ld$ in the alphabet and
the transition function satisfies the condition 
$\delta(\Ld,\dots,\Ld)=\Ld$ (``nothing generates nothing''). This
convention guarantees that  configurations containing only
a finite number of non-blank symbols produce configurations with
 the same property.

A 2CA (a binary CA) is a CA with the binary alphabet $\{0,1\}$. Due
to relation with games, it is convenient to assume that $1$ is the
blank symbol in 2CAs.

It is well-known that Turing machines can be simulated by CA with
 $r=1$ and any CA can be simulated
by a~2CA (with  a larger size of a~neighborhood). In the proofs we need
some specific requirements on these simulations. They will be
discussed later in Section~\ref{sect:TM->CA}.

\section{Outline of the Proofs}\label{sect:outline}

Both Theorems~\ref{th:main} and~\ref{th:aux} are proved along the same
lines.
\begin{enumerate}
\item Choose a hard language $L$ and fix a Turing machine $M$
  recognizing~it. 
\item Construct another machine $U$ which simulates an operation
  of~$M$ 
  \emph{on all inputs in parallel} (a~realization of this idea is
  discussed in Section~\ref{sec:parallel}). 
\item The machine $U$ is simulated by a CA $C_U$. The cellular
  automaton $C_U$ is simulated in its turn by a 2CA $C_U^{(2)}$ (see
  Section~\ref{sect:TM->CA} for the details). And
  this $C_U^{(2)}$ is simulated by $d$-dimensional MSG $\D_U$ (see
  Section~\ref{sect:2CA->MSG}), where $d$ depends on  $C_U^{(2)}$.
\item It is important that the result of operation of $M$ on an input $w$
  is completely determined by the value of a specific position of
  $\D_U$ and this position is computed in polynomial time. So, it
  gives a~polynomial reduction of  the language $L$ to $\PW(D_U)$.
\item Now a theorem follows  from a hardness assumption on the language
  $L$.
\end{enumerate}

\section{From Cellular Automata to Subtraction Games}\label{sect:2CA->MSG}

In this section we follow the construction of Larsson and
W\"astlund~\cite{LarssonWastlund13} with minor changes.

\subsection{First Step: Simulation of a 2CA by a 2-dimensional Modular Game}

Let $C= (\{0,1\}, \delta)$ be a 2CA. The symbol $1$ is assumed to
be blank: $\delta(1,\dots,1)=1$. We are going to relate evolution of
$C$ starting from the configuration $c(0)=(\dots 11011\dots)$ with the
values $p(x)$ of positions of a 2-dimensional $2N$-modular MSG
$\D'_C$. The value of $N$ depends on $C$ and we will choose it greater than~$r$. 

The exact form of the relation is as follows. Time arrow is a
direction $(1,1)$ in the space of game positions, while the coordinate
along the  automaton tape is in the direction $(1,-1)$. 

The configuration of $C$ at moment $t$  corresponds to positions on a
line  $x_1+x_2 = 2Nt$. The cell coordinate is  $u = (x_1-x_2)/2$, as
it shown in Fig.~\ref{pic:CAgame} ($N=1$). For the configuration
$(\dots 11011\dots)$ we assume that $0$ has the coordinate~$0$ on the automaton
tape.

\begin{figure}
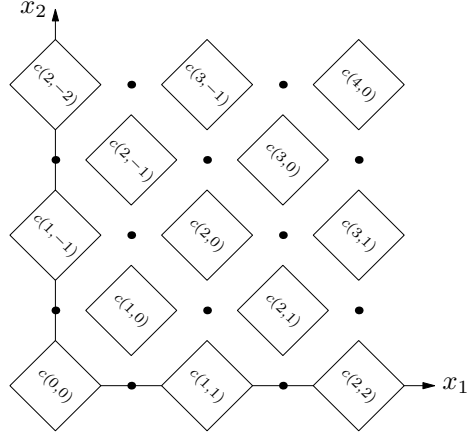

    \centering
  \mpfile{CA}{12}
  \caption{Encoding  configurations of 2CA by positions of a modular MSG}\label{pic:CAgame}
\end{figure}

The relation between the content of the automaton tape  and the values of positions
of the game $\D'_C$ is
\begin{equation}\label{2CA=mSG}
 c(t,u) = p(Nt+u,Nt-u)\quad\text{for}\ |u|\leq Nt.
\end{equation}
The choice of the initial configuration implies that if $|u|>Nt>rt$, then  $c(t,u)=1$. To extend the relation to this area, we extend
the value function $p(x_1,x_2)$ by setting $p(x_1,x_2)=1$ if either
$x_1<0$ or $x_2<0$. In other words, we introduce dummy positions with
negative values of coordinates. These positions are regarded as
terminal and having the value~$1$. 
Note
that for the game evaluation functions $[\dots]$ the equality
$
[p_1,\dots, p_k, 1,\dots,1]=[p_1,\dots,  p_k]
$
holds, 
i.e. extra arguments with the value $1$ do not affect the function
value. So, the dummy positions do not change the values of real
positions of a game.

The starting configuration $c(0)=(\dots 11011\dots)$
satisfies this relation for any game: the position $(0,0)$ is a P-position.

To maintain the relation~\eqref{2CA=mSG}, we should choose 
an appropriate modulus and
difference sets. 

Note  that the Boolean functions $[p_1,\dots, p_n]$ defined by
Eq.~\eqref{val-eq} form a complete basis: any Boolean function is
represented by a circuit with gates  $[\dots]$. It is enough
to check that the functions from the standard complete basis can be expressed
in the basis $[\dots]$:
\[
  \lnot x = [x],\quad
  x\lor y = [[x],[y]],\quad
  x\land y =[[x,y]].
\]

Now take a circuit in the basis $[\dots]$ computing the transition function
of the 2CA~$C$. The circuit is a sequence of assignments $s_1,\dots,
s_N$ of the form  
\[
  s_j:= [\text{list of arguments}],
\]
where arguments of the $j$th assignment may be  the input
variables or the values of previous assignments $s_i$, $i<j$. The
value of the last assignment $s_N$ coincides with the value of the
transition function   $\delta(u_{-r},\dots,
u_{-1},u_0, u_1,\dots, u_r)$. 

For technical reasons we require that the last assignment
$s_N$ does not contain the input variables $u_i$. It is easy to satisfy
this requirement: just start a circuit with  assignments in the form
$s_{i+r+1}=[u_i]$; $s_{i+3r+2}=[s_{i+r+1}]$, where $-r\leq i\leq r$, and substitute a~variable
$u_i$ in the following assignments by $s_{i+3r+2}$. 
The circuit size of the modified circuit is obviously
greater than~$r$.

We extend the relation~\eqref{2CA=mSG} to intermediate positions in
the following way
\begin{equation}\label{eq:intermediate}
\begin{aligned}
  &p(Nt+i,Nt-i)=c(t,i),\\
  &p(Nt+i+j,Nt-i+j)= s_j, &&1\leq j<N,
\end{aligned}
\end{equation}
where $s_j$ is the value of $j$th assignment of the circuit for the
input variables values $c(t,i-r), \dots, c(t,i), \dots,c(t,i+r)$.

\begin{proposition}\label{modDiff}
  There exist  sets $\D_j$ such that the relation~\eqref{eq:intermediate} holds for values of
  the modular game $\D'_C$ with the difference sets $\D_j$.
\end{proposition}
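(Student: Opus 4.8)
The plan is to read the difference sets straight off the circuit, using the residue of a position modulo $2N$ to decide which assignment that position must realize. First I would record the residues of the positions occurring in~\eqref{eq:intermediate}. The position $(Nt+i+j,\,Nt-i+j)$ carrying the value $s_j$, which I abbreviate $P_j(i)$, has coordinate sum $x_1+x_2=2Nt+2j$ and hence residue $2j$. Thus the cell layer (level $0$, value $c(t,i)$) sits at residue $0$, each intermediate level $j$ with $1\le j\le N-1$ sits at residue $2j$, and level $N$ — which is nothing but level $0$ of the next time step, carrying the transition output $s_N=\delta(\dots)$ — again lands on residue $0$. This is exactly why the modulus is $2N$: among the occupied positions the residue recovers the level, so it singles out the one assignment a position should compute. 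No position of~\eqref{eq:intermediate} has odd coordinate sum, so all odd-residue difference sets may be taken empty.

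Next I would convert each assignment into moves, matching the game relation $p(v)=[p(v_1),\dots,p(v_n)]$ of~\eqref{val-eq}: I want the moves out of $P_j(i)$ to land precisely on the positions carrying the arguments of $s_j$. An input variable $u_{i'}$ names the cell value $c(t,i+i')=p(P_0(i+i'))$, and $P_j(i)-P_0(i+i')=(j-i',\,j+i')$; a previous assignment $s_{i''}$ with $i''<j$ is carried by $P_{i''}(i)$, and $P_j(i)-P_{i''}(i)=(j-i'',\,j-i'')$. These vectors depend only on $j$ and on the argument index, never on $t$ or $i$, so the prescription
\[
\D_{2j}=\{(j-i',\,j+i'):u_{i'}\text{ is an argument of }s_j\}\cup\{(j-i'',\,j-i''):s_{i''}\text{ is an argument of }s_j\}
\]
for $1\le j\le N-1$, together with the analogous set for $s_N$ placed at residue $0$, is well defined as a modular game. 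Every listed vector has positive coordinate sum ($2j$, or $2(j-i'')$), so each move strictly decreases $x_1+x_2$; the game is therefore a DAG and the moves out of any residue-$2j$ position reach only positions of strictly smaller coordinate sum.

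The verification is then an induction on $x_1+x_2$. The base line $x_1+x_2=0$ contains only $(0,0)$, whose moves all fall into negative (dummy, value-$1$) coordinates, so $p(0,0)=[1,\dots,1]=0=c(0,0)$; the remaining low level-$0$ positions are dummies of value $1$, matching $c(0,u)=1$ for $u\ne0$, where I use $N>r$ so that the game light cone $|u|\le Nt$ contains the automaton light cone. For the inductive step, the construction of $\D_{2j}$ makes the moves out of $P_j(i)$ hit exactly the argument positions of $s_j$, whose values agree with the corresponding circuit quantities by the inductive hypothesis; the relation $p(P_j(i))=[\,\dots\,]$ then reproduces the defining equation of the assignment, giving $p(P_j(i))=s_j$. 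For $j=N$ this reads $p(P_0(i))=s_N=c(t{+}1,i)$, which advances~\eqref{2CA=mSG} by one time step and keeps the induction going.

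The one delicate point I expect is the collision of level $0$ and level $N$ at residue $0$: a single set $\D_0$ must simultaneously serve as the cell layer and realize the transition $s_N$. This is precisely what the requirement ``$s_N$ contains no input variable $u_i$'' secures — it forces $\D_0$ to refer only to intermediate-assignment positions (which already store copies of the inputs through the padding assignments $s_{i+3r+2}=u_i$), so that $\D_0$ is unambiguous and consistent across all time steps. Everything else reduces to the routine check that, at each residue, the vectors listed above constitute the complete set of moves.
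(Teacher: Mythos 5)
Your construction and induction coincide with the paper's proof: the same vectors $(j-k,\,j+k)$ for input-variable arguments and $(j-k,\,j-k)$ for intermediate-assignment arguments, assembled into $\D_{2j}$ by residue, and the same induction on the coordinate sum, with the dummy-position convention handling both the base case and out-of-cone references. So the core of your proposal is correct and takes essentially the same route as the paper.

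The one place you diverge is the closing paragraph, and there your diagnosis is off. The requirement that $s_N$ contain no input variables is not what makes $\D_0$ well defined or the induction work. Even if some $u_k$ were an argument of $s_N$, the corresponding vector $(N-k,\,N+k)$ placed in $\D_0$ would point from the cell position $(N(t{+}1)+i,\,N(t{+}1)-i)$ to the cell position $(Nt+i+k,\,Nt-i-k)$, which is exactly the position carrying the intended argument $c(t,i+k)$; the sets are time-independent and unambiguous either way, and the induction of this proposition goes through unchanged. The requirement is imposed for the benefit of the \emph{next} step, Proposition~\ref{mSG->SG}, as the paper's remark immediately after this proof indicates: it guarantees that every difference vector satisfies $a_1+a_2\not\equiv 0\pmod{2N}$, hence $k\ne j$ in the definition of the $(2N{+}2)$-dimensional difference set, so each vector genuinely decrements one counter coordinate $e^{(j)}$ and increments another $e^{(k)}$. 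A vector with $a_1+a_2\equiv 0$ would have $e^{(j)}-e^{(k)}=0$ and could be subtracted at \emph{any} residue, destroying the modular structure on which the de-modularization step relies. So the ``delicate point'' you flag is not a point of delicacy for this proposition at all; what it protects is the step you were not asked to prove.
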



\begin{proof}
For each line $x_1+x_2 = 2Nt+2j$ we  specify the difference set $\D_{2j}$
according to the arguments of an assignment $s_j$. The sets with odd
indices are unimportant and may be chosen arbitrary.

If an input variable $u_k$ is an argument
of the assignment $s_j$, then we include in the set $\D_{2j}$ the vector
$(j-k, j+k)$. Since 
\[
(Nt+i+j, Nt-i+j) = (Nt+i+k, Nt-i-k) + (j-k, j+k),
\]
it guarantees that there exists a legal move from the position $(Nt+i+j,
Nt-i+j)$ to the position $(Nt+i+k, Nt-i-k)$.

If the value of an intermediate assignment $s_k$ is an argument of the
assignment $s_j$, then we include in the set $\D_{2j}$ the vector
$(j-k, j-k)$. It guarantees that there exists a move from the position
$(Nt+i+j, Nt-i+j)$ to the position $(Nt+i+k, Nt-i+k)$.

The rest of the proof is by induction on the parameter $A=2Nt+2i$,
where $t\geq0$, $0\leq i<N$. For $A=0$ we have $t=0$ and $i=0$. So the
relation~\eqref{eq:intermediate} holds as it explained above. Now
suppose that the relation holds for all lines $x_1+x_2 = A'$,
$A'<2Nt+2j$. To complete the proof, we should verify the relation on the
line $x_1+x_2 = 2Nt+2j$. From the construction of the sets $\D_{2j}$
and  the induction hypothesis 
we conclude that
\[
p(Nt+i+j,Nt-i+j) = [\text{arguments of the assignment $s_j$}].
\]
Here arguments of  the  assignment
 $s_j$ are  the values of the input variables and the values of  previous assignments
in the circuit computing the
transition function $\delta(c(t,u-r), \dots, c(t,u),\dots,
c(t,u+r))$. 

The last touch is to note that the value of the $N$th assignment is
 just the value 
$c(t+1, u) = \delta(c(t,u-r), \dots, c(t,u),\dots,
c(t,u+r))$.
\qed
\end{proof}

Note that  the game $\D'_C$ has the
property: if there is a legal move from $(x_1,x_2)$ to $(y_1,y_2)$,
then either $x_1+x_2\equiv 0\pmod{2N}$ or the residue of $(y_1,y_2)$
modulo $2N$ is less than the residue of $(x_1,x_2)$ (we assume the
standard representatives for residues: $0,1,\dots, 2N-1$). Also,
$x_1+x_2\not\equiv y_1+y_2\pmod{2N}$ since the input variables are not
arguments of the final assignment.


\subsection{Second Step: Simulation of a 2CA by a $(2N+2)$-dimensional Subtraction Game}

To exclude modular conditions we use the trick suggested
in~\cite{LarssonWastlund13}.







Using the 2-dimensional modular game $\D'_C$ constructed above
we construct a $(2N+2)$-dimensional MSG $\D_C$ with the difference
set 
\[
\D = \big\{(a_1, a_2,0^{2N}
\big)+e^{(j)}-e^{(k)}: 
(a_1,a_2)\in D_j,\ k=j- a_1-a_2\pmod{2N}\big\}.
\]
Here $e^{(i)}$
is the $(i+2)$th coordinate vector: $e^{(i)}_{i+2} =
1$, 
$e^{(i)}_s=0$ for $s\ne i+2$.

\begin{proposition}\label{mSG->SG}
  The value of a~position $(x_1,x_2, 0^{2N})+ e^{(2r)}$ of the game $\D_C$ equals
  the value of a~position $(x_1,x_2)$ of the modular game $\D'_C$ if $2r\equiv x_1+x_2\pmod {2N}$.
\end{proposition}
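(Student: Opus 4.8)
The plan is to set up a value-preserving correspondence between the positions of the modular game $\D'_C$ and a distinguished subset of positions of $\D_C$, and then to argue by induction on the total coordinate sum, which strictly decreases along moves. Write $\rho = 2r$ for the residue of $x_1+x_2$ modulo $2N$ that indexes the marked extra coordinate, and define the embedding $\Phi(x_1,x_2) = (x_1,x_2,0^{2N}) + e^{(\rho)}$. First I would confirm that $\D_C$ is a legitimate MSG and that the induction is well founded: every difference vector $(a_1,a_2,0^{2N}) + e^{(j)} - e^{(k)}$ has coordinate sum $a_1+a_2 > 0$, since the $+e^{(j)}$ and $-e^{(k)}$ contributions cancel, so each move decreases $\sum_s x_s$.

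The core of the argument is a local analysis of the legal moves at a position $\Phi(x_1,x_2)$. Among the extra coordinates of $\Phi(x_1,x_2)$ only the $\rho$-th equals $1$, all others being $0$; hence a difference vector carrying the summand $e^{(j)}$ can be subtracted only when $j=\rho$, and every difference vector indexed by a residue $j\ne\rho$ is blocked. When $j=\rho$ the vector arises from some $(a_1,a_2)\in D_\rho$ with $k = \rho - a_1 - a_2 \pmod{2N}$, and subtracting it produces $(x_1-a_1, x_2-a_2, 0^{2N}) + e^{(k)}$. Invoking the earlier remark that a legal move of $\D'_C$ always changes the residue of $x_1+x_2$ (the input variables are not arguments of the final assignment), we get $k\ne\rho$, so the result again has a single marked coordinate sitting precisely at the residue of its first two coordinates, i.e. it equals $\Phi(x_1-a_1,x_2-a_2)$. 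Conversely, every $(a_1,a_2)\in D_\rho$ with $x_1\ge a_1$ and $x_2\ge a_2$ yields such a move. This establishes a bijection between the legal moves of $\D'_C$ from $(x_1,x_2)$ and the legal moves of $\D_C$ from $\Phi(x_1,x_2)$, with targets matched through $\Phi$.

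Given the move bijection, the values coincide by induction via Eq.~\eqref{val-eq}: evaluating $[\dots]$ over the successors of $\Phi(x_1,x_2)$ in $\D_C$ produces the same Boolean combination of the same argument values as evaluating $[\dots]$ over the successors of $(x_1,x_2)$ in $\D'_C$, the base case being terminal positions (empty move set, value $0$). I would also note that the negative-coordinate convention of $\D'_C$ is harmless: a modular move landing in a dummy position corresponds to no move of $\D_C$, but such a position has value $1$, and value-$1$ arguments do not affect $[\dots]$, so dropping these moves alters neither side.

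The step I expect to be the main obstacle is the bookkeeping in this local analysis: checking that the single marked coordinate forces $j=\rho$, that no spurious moves survive, and—most delicately—that $k\ne\rho$, so that the image of every move is again a clean $\Phi$-image carrying exactly one indicator at the correct residue. This last point is exactly where the structural remark on $\D'_C$ (that moves strictly change the residue of $x_1+x_2$) is needed; without it the correspondence would fail to remain inside the image of $\Phi$ and the induction would not close.
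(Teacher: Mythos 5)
Your proof is correct and is essentially the paper's own argument spelled out in more detail: induction on the coordinate sum $x_1+x_2$ (well founded because every difference vector has positive sum), plus a move-by-move correspondence between $\D_C$ at marked positions and $\D'_C$, with modular moves into dummy positions absorbed by the identity $[p_1,\dots,p_k,1,\dots,1]=[p_1,\dots,p_k]$.

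One correction to your own diagnosis of where the delicacy lies. The structural remark that moves of $\D'_C$ change the residue of $x_1+x_2$ modulo $2N$ (equivalently, $k\neq j$ in every difference vector of $\D$) is needed at the \emph{blocking} step, not at the step where you invoke it. Your claim that a difference vector indexed by $j\neq\rho$ cannot be subtracted tacitly assumes that its component $e^{(j)}-e^{(k)}$ does not vanish: if some $(a_1,a_2)\in D_j$ had $a_1+a_2\equiv 0\pmod{2N}$, then $k=j$, the difference vector would collapse to $(a_1,a_2,0^{2N})$, and it could be subtracted at \emph{any} position with large enough first two coordinates --- exactly the spurious moves you worry about, and they would break the correspondence. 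By contrast, the step where you do invoke the remark needs no help: for $j=\rho$ the move lands at $(x_1-a_1,x_2-a_2,0^{2N})+e^{(k)}$, which is a clean image of $\Phi$ whether or not $k=\rho$, because $k\equiv(x_1-a_1)+(x_2-a_2)\pmod{2N}$ holds in either case (when $k=\rho$ the marker simply stays put, still at the correct residue). So all the facts you cite are true and jointly sufficient; only the role you assign to the key one should be moved to the blocking claim, where it actually does its work.
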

\begin{proof}
  Induction on $t=x_1+x_2$. The base case $t=0$ is  due
  to the convention on the values of dummy positions (with negative
  coordinates). 

  The induction step. A~legal move at a~position
  $(Nt+i+j,Nt-i+j,0^{2N})+e^{(2j)}$ is to a position 
  $(Nt+i+j,Nt-i+j,0^{2N})-(a_1,a_2,0^{2N})+e^{(2s)}$, where $2s\equiv
  2j-a_1-a_2\pmod {2N}$ and $(a_1,a_2)\in \D_{2j}$. It corresponds to
  a move from $(Nt+i+j,Nt-i+j)$ to $(Nt+i+j-a_1,Nt-i+j-a_2)$ in the
  modular game.
\qed\end{proof}

From Propositions~\ref{modDiff} and~\ref{mSG->SG} we conclude

\begin{corollary}\label{CA->SG}
  For any 2CA $C$ there exist an integer $N$ and a $(2+2N)$-dimensional MSG $\D_C$ such that
  the relation
\[
 c(t,u) = p(Nt+u,Nt-u,0,0,\dots,0,1)\quad\text{holds for}\ |u|\leq Nt.
\]
\end{corollary}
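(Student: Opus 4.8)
The plan is to obtain the statement by composing the two transfer results already proved, Proposition~\ref{modDiff} and Proposition~\ref{mSG->SG}; the only genuine work is to keep track of coordinates and residues. First I would read off what Proposition~\ref{modDiff} gives on the configuration lines themselves. By the first line of~\eqref{eq:intermediate} (which is just relation~\eqref{2CA=mSG}), the $2N$-modular game $\D'_C$ satisfies
\[
c(t,u) = p_{\D'_C}(Nt+u,\,Nt-u),\qquad |u|\leq Nt,
\]
where $N$ is the circuit size fixed before Proposition~\ref{modDiff} and has been chosen with $N>r$. The range $|u|\leq Nt$ is precisely the region asserted there; outside it both sides equal the blank value $1$, by the dummy-position convention.

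Next I would transport this equality from $\D'_C$ to the genuine subtraction game $\D_C$ via Proposition~\ref{mSG->SG}. Put $x_1=Nt+u$ and $x_2=Nt-u$, so that $x_1+x_2=2Nt\equiv 0\pmod{2N}$. The marker index $2r$ required by Proposition~\ref{mSG->SG}, namely $2r\equiv x_1+x_2\pmod{2N}$, is realized on these lines by $2r=2N$, and the corresponding coordinate vector $e^{(2N)}$ is the one occupying the last coordinate; hence $(x_1,x_2,0^{2N})+e^{(2N)}=(Nt+u,\,Nt-u,0,\dots,0,1)$. Proposition~\ref{mSG->SG} then yields
\[
p_{\D_C}(Nt+u,\,Nt-u,0,\dots,0,1) = p_{\D'_C}(Nt+u,\,Nt-u).
\]
Chaining the two displayed equalities gives $c(t,u)=p_{\D_C}(Nt+u,\,Nt-u,0,\dots,0,1)$ for every $|u|\leq Nt$, in dimension $2+2N$, which is the claim.

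The single point that deserves care, and the step I would treat as the main obstacle, is the residue bookkeeping: one must verify that the zero residue class of $x_1+x_2$ on the configuration lines is represented by $2N$ rather than $0$ among the admissible coordinate labels, so that the active marker lands in the last coordinate and thereby matches the $1$ in $(Nt+u,\,Nt-u,0,\dots,0,1)$. Once this identification is fixed, the corollary is a direct substitution, and no new induction is needed beyond those already carried out in Propositions~\ref{modDiff} and~\ref{mSG->SG}.
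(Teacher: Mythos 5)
Your proof is correct and follows exactly the paper's route: the paper obtains Corollary~\ref{CA->SG} by directly composing Propositions~\ref{modDiff} and~\ref{mSG->SG}, which is precisely your chaining of the relation~\eqref{2CA=mSG} on the lines $x_1+x_2=2Nt$ with the marker identity of Proposition~\ref{mSG->SG}. Your residue bookkeeping (the zero class being represented by $2N$, so the active marker $e^{(2N)}$ occupies the last coordinate and matches the trailing $1$) merely makes explicit a detail the paper leaves implicit, and it is handled correctly.
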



\section{From Turing Machines to Cellular Automata}\label{sect:TM->CA}

In this section we outline a~way to simulate a Turing machine by a
binary cellular automaton. It is a standard simulation, but we will put
specific requirements.

Let $M = (Q,\{0,1\},\Gamma, \Ld, \delta_M, 1, 2, 3)$ be a Turing
machine, where $Q = \{1,\dots,q\}$, $q\geq3$, is the set of states,
the input alphabet is binary, $\Gamma=\{0,1,\dots,\ell\}$ is the tape
alphabet, $\ell>1$ is the blank symbol, $\delta_M\colon Q\times\Gamma\to
Q\times \Gamma\times\{+1,-1\} $ is the transition function, and $1, 2,
3$ are the initial state, the accept state, the reject state
respectively.

We encode a configuration of $M$ by a doubly infinite string $c\colon
\ZZ\to A$, where  $A = \{0,\dots,q\}\times \{0,\dots,\ell\}$,
indicating the head position by a~pair $(q,a)$, $q>0$, $a\in
\Gamma$; the content
of any other cell is encoded as  $(0,a)$, $a\in \Gamma$.

Let $c_0, \dots, c_t,\dots$ be a sequence of encoded configurations produced
by  $M$ from the starting configuration $c_0$.
It is easy to see that $c_{t+1}(u)$ is determined by $c_t(u-1)$,
$c_t(u)$, $c_{t}(u+1)$. In this way we obtain  the CA $C_M= (A,\delta_C) $ over the
alphabet $A$ with  the transition function
$\delta_C\colon A^3\to A$ simulating operation of~$M$ in encoded
configurations. It is easy to see that $\Ld = (0,\ell)$ is the blank symbol: $\delta_C(\Ld,\Ld,\Ld) = \Ld$.

The next step is to simulate $C_M$ by a 2CA $C_M^{(2)}$. For this
purpose we use an automaton $C'_M= (A',\delta'_C) $ isomorphic to
$C_M$, where $A' = \{0,\dots,L-1\}$ and $L = (|Q|+1)\cdot|\Gamma|$.
The transition function $\delta'_C$ is defined as follows
\[
\delta'_C (i,j,k) = \pi (\delta_C(\pi^{-1}(i), \pi^{-1}(j), \pi^{-1}(k))), 
\]
where $\pi \colon A\to A'$ is a bijection. To keep a~relation between
the starting
configurations 
we require that $\pi(\Ld) = 0$, $\pi((1,\ell)) = 1$.
Recall that $1$ is the initial state of $M$ and
$\ell$ is the blank symbol of $M$.

To construct the transition function of $C_M^{(2)}$ we encode symbols
of $A'$ by binary words of length $L+2$ as
follows 
\[
\ph(a) = 1^{1+L-a}0^a1.
\]
In particular, $\ph(0) = \ph(\pi (\Ld)) = 1^{L+2}$ and 
$\ph(1) = \ph(\pi(1,\ell)) = 1^L01$. The encoding $\ph$ is naturally
extended to words in the alphabet $A'$ (finite or infinite). 

Thus the starting configuration of $M$ with the
empty tape corresponds to the configuration $\dots1110111\dots$ of
$C_M^{(2)}$. Recall that $1$ is the blank symbol of $C_M^{(2)}$.

With an
abuse in notation, we denote
below by  $\ph$ the extended encoding of configurations in the
alphabet $A'$ by doubly infinite binary words.
We align configurations in the following way:
if $i = q(L+2)+k$, $0\leq k<L+2$, then $\ph(c)(i)$ is a
$k$th bit of the $\ph(c(q))$. 

The size of a neighborhood of $C_M^{(2)}$ is $r = 2(L+2)$. To define
the transition function $\delta^{(2)}_M$ we use a~local inversion
property of the encoding~$\ph$:
 looking at the $r$-neighborhood of an
$i$th bit of $\ph(c)$, where $i = q(L+2)+k$, $0\leq k<L+2$, one can restore symbols $c(q-1)$, $c(q)$,
$c(q+1)$ and the position $k$ of the bit provided the neighborhood
contains zeroes ($0$ is the non-blank symbol of $C_M^{(2)}$).  Note
that if the neighborhood of a bit does not contain zeroes, then the
bit is a part of encoding of the blank symbol $0$ of $C'_M$ and,
moreover,  $c(q-1)=c(q)=c(q+1)=0$.

\begin{lemma}\label{restore}
  There exists a function $\delta^{(2)}_C\colon \{0,1\}^{2r+1}\to\{0,1\}$ such
    that a 2CA $C_M^{(2)}=(\{0,1\},\delta^{(2)}_C)$
simulates $C'_M$:  starting from  $b_0 = \dots1110111\dots$, it produces  the sequence of 
configurations $b_0,b_1,\dots$ such that $b_t = \ph(c_t)$ for any $t$,
where $(c_t)$ is the sequence of configurations produced by $C'_M$
starting from the configuration $c_0=\dots0001000\dots$
\end{lemma}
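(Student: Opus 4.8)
The plan is to read the transition function $\delta^{(2)}_C$ off the block structure of the encoding $\ph$ and then verify the simulation by induction on $t$. Recall that $\ph(a)=1^{1+L-a}0^{a}1$, so each symbol occupies a block of fixed length $L+2$ in which the zeros, if any, form a single run whose right end sits at within-block position $L$ and is immediately followed by the terminal $1$ at position $L+1$. I would define $\delta^{(2)}_C$ on a window $w\in\{0,1\}^{2r+1}$, whose central bit is the one to be updated, as follows. Call a maximal run of $0$s in $w$ \emph{usable} if the bit immediately to its right still belongs to $w$. If $w$ contains a usable run, then, reading $w$ as a fragment of an encoding $\ph(c)$ with central bit $i=q(L+2)+k$, I synchronize to the block grid using the run, recover $c(q-1),c(q),c(q+1)$ together with the offset $k$, and output the $k$th bit of $\ph\big(\delta'_C(c(q-1),c(q),c(q+1))\big)$. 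If $w$ has no usable run, I output $1$; in particular $\delta^{(2)}_C(1^{2r+1})=1$, so $1$ is blank as a 2CA requires. This recipe is a well-defined total function of $w$; on windows that never arise from an encoding its value is immaterial.

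The heart of the matter, and the step I expect to require the most care, is the synchronization, i.e.\ the \emph{local inversion property}. Since the right end of any maximal $0$-run lies at within-block position $L$, the $1$ following it lies at position $L+1\equiv-1\pmod{L+2}$; because block boundaries are spaced by $L+2$, a single usable run therefore pins down all boundaries inside $w$, hence the offset $k$ and the $L+2$-bit spans of the neighbouring blocks. A short but necessary calculation, which I would carry out separately, shows that for $r=2(L+2)$ the window of length $2r+1=4(L+2)+1$ always contains the blocks $q-1$, $q$, $q+1$ in full for every $0\leq k<L+2$, so once the grid is known these three symbols can be read off by counting zeros. One must also check that distinct usable runs yield the same grid on a genuine encoding (they do, as all boundaries are congruent mod $L+2$) and that truncation cannot cause a misread: a $0$-run can touch the right border of $w$ only inside block $q+2$, which is precisely why such runs are declared unusable.

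It remains to justify the ``no usable run'' branch. If $w$ contains no $0$ at all, then the fully visible blocks $q-1,q,q+1$ are each $1^{L+2}=\ph(0)$, so all three symbols are $0$. If the only $0$-run is unusable, it lies in block $q+2$ and again blocks $q-1,q,q+1$ carry no zero, hence equal $\ph(0)$. In both sub-cases $c(q-1)=c(q)=c(q+1)=0$, so $\delta'_C(0,0,0)=0$ because $0=\pi(\Ld)$ is blank for $C'_M$, and $\ph(0)=1^{L+2}$ has every bit equal to $1$; thus the output $1$ is correct independently of $k$.

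Granting the above, the claim follows by induction on $t$. For $t=0$ we have $\ph(c_0)=\dots1110111\dots=b_0$, as already noted. Assume $b_t=\ph(c_t)$ and fix a cell $i=q(L+2)+k$ of $b_{t+1}$. If the $r$-window of $i$ in $b_t$ has a usable run, the inversion branch recovers $c_t(q-1),c_t(q),c_t(q+1)$ and $k$, computes $c_{t+1}(q)=\delta'_C(c_t(q-1),c_t(q),c_t(q+1))$, and emits the $k$th bit of $\ph(c_{t+1}(q))$, which is $\ph(c_{t+1})(i)$ by the alignment convention. Otherwise the previous paragraph gives $c_{t+1}(q)=0$ and output $1$, again equal to $\ph(c_{t+1})(i)$. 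Hence $b_{t+1}=\ph(c_{t+1})$, completing the induction.
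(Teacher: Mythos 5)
Your proof is correct and follows essentially the same route as the paper's: both define $\delta^{(2)}_C$ by locally inverting the encoding $\ph$ --- using a run of zeros in the $r$-neighborhood to synchronize to the block grid, recovering $c(q-1)$, $c(q)$, $c(q+1)$ together with the offset $k$, outputting the corresponding bit of $\ph\big(\delta'_C(c(q-1),c(q),c(q+1))\big)$, and defaulting to $1$ (blankness) when no zeros are usable --- and then conclude by induction on $t$. Your ``usable run'' device is in fact a slightly more careful rendering of the paper's ``nearest zero'' step, since it explicitly handles truncation of a zero-run at the window border (where the paper loosely asserts $q'\in\{-1,0,+1\}$, though a visible run can also lie in block $q\pm2$), but the substance of the two arguments is identical.
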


\begin{proof}
The function $\delta^{(2)}_C$ should satisfy the following property. If
$b=\ph(c)$, then 
\begin{equation}\label{delta2-def}
\delta^{(2)}_C\big((b(i-r),\dots, b(i), \dots , b(i+r)\big)
= 
\ph\big(\delta'_C\big(c(q-1), c(q), c(q+1))\big)(k)
\end{equation}
for all integer $i =  q(L+2)+k$, $0\leq k<L+2$. This property means
that applying the function $\delta^{(2)}_C$ to $b$ produces the
configuration $b_1=\ph(c_1)$, where $c_1$ is the configuration
produced by the transition function $\delta'_C$ from the
configuration~$c$. Therefore the sequence of configurations produced
by $C_M^{(2)}$ starting at $\ph(c_0)$ is the sequence of the encodings of
configurations  $c_t$ produced by $C'_M$ starting at~$c_0$.

Note that $\ph(c(q-1))$, $\ph(c(q))$ and $\ph(c(q+1))$ are in the
$r$-neighborhood of a~bit $i$. 

Thus, from the condition on blank symbols in the alphabets $A'$ and
$\{0,1\}$, we conclude the required property holds if the
$r$-neighborhood of a~bit $i$  does not contain zeroes (non-blank
   symbols of $C_M^{(2)}$). 
In this case the $i$th bit of $b$ is a~part of encoding of the blank
symbol $0$ in the alphabet $A'$ and,
moreover,  $c(q-1)=c(q)=c(q+1)=0$.

Now suppose that the $r$-neighborhood of the $i$th bit contains
zeroes. Take the nearest zero to this bit (either from the left or
from the right) and the maximal series $0^a$ containing it. The series
is a part of the encoding of a symbol in $c$. So, there are at least
$1+L-a$ ones to the left of it. They all should be in the
$r$-neighborhood of the bit.  Thus we locate an encoding of a symbol
$c(q+q')$, $q'\in\{-1,0,+1\}$, and we are able to determine $q'$
(depends on relative position of the $i$th bit with respect to
the first bit of the symbol located). So, the  symbols
$c(q-1)$, $c(q)$, $c(q+1)$ can be restored from the
$r$-neighborhood of the $i$th bit.
Moreover,  a~relative position $k$ of the $i$th bit in  $\ph(c(q+q'))$ can
also be restored. 

Because the symbols $c(q-1)$, $c(q)$, $c(q+1)$ and the position $k$
are the functions of the  $r$-neighborhood of the bit $i$, it is
correct to define the function  $\delta^{(2)}_C$ as
\[
\delta^{(2)}_C\big(u_{-r},\dots,u_0,\dots,u_r\big) = 
\ph\big(\delta'_M(c(q-1), c(q), c(q+1))\big)(k)
\]
if the restore process is successful on $(u_{-r},\dots,u_0,\dots,u_r)$; 
for other arguments, the function can be defined arbitrary. 
It is clear that this function satisfies the property~\eqref{delta2-def}.
\qed\end{proof}

\section{A Parallel Execution of a Turing Machine}\label{sec:parallel}

The last construction needed in the main proofs is
a Turing machine $U$ simulating an operation of a Turing machine $M$
\emph{on all inputs}. The idea of simulation is well-known. But,
again, we need to specify some details of the construction.

We assume that on each input of length $n$ the machine $M$ makes at most
$ T(n)>n$ steps.

The alphabet of $U$ includes the set $A = \{0,\dots,q\}\times
\{0,\dots,\ell\} $ (we use notation from the previous section) and
additional symbols. 

The machine $U$ operates in \emph{stages} while its tape is divided
into \emph{zones}. The zones are surrounded by the delimiters,
say, $\triangleleft$ and $\triangleright$. We assume that
$\triangleleft$ is placed to the  cell~$0$.
Also the zones are
separated by a delimiter, say, $\diamond$.  An operation of $M$ on a
particular input $w$ is simulated inside a separate zone.

Each zone consists of three blocks.
as pictured in Fig.~\ref{TMzone}. 
\begin{figure}[!h]
  \centering
  \mpfile{TM}{1}
  \caption{A zone on the tape of $U$}\label{TMzone}
\end{figure}

The first  block of a zone has the size $1$. It carries $(0,1)$ iff $M$
accepts the input written in the second block. Otherwise it carries
$(0,0)$.  The last block contains a~configuration of $M$ represented
by a word over the alphabet $A$ as described in Section~\ref{sect:TM->CA}. Blocks in a zone are
separated by a delimiter, say \#. 

At start of  a stage   $k$ there are  $k-1$ zones corresponding to
the inputs  $w_1$, $w_2$, $\dots$, $w_{k-1}$ of $M$. 
We order binary
words by their lengths and words of equal length are ordered
lexicographically. The last block of a zone~$i$ contains the configuration
of 
$M$ after running $k-1-i$ steps on the input  $w_i$. 

During the stage $k$, the machine  $U$ moves along the tape from
$\triangleleft$ to $\triangleright$ and
in each zone simulates the next step of operation   of $M$. At the end
of the stage the machine $U$ writes a~fresh zone with the input $w_k$ and the
initial configuration of $M$ on this input. The initial configuration is
extended in both directions by white space of size  $T(n)$, as it shown
in Fig.~\ref{TMinit}. 

\begin{figure}[!h]
  \centering
  \mpfile{TM}{2}
  \caption{A fresh zone on the~stage $k$}\label{TMinit}
\end{figure}

When an operation of $M$ on an input $w_k$ is finished, the machine
$U$ updates the result block and does not change the zone on
subsequent stages.

In reductions below we need $U$  satisfying specific properties.

\begin{proposition}\label{goodU}
  If $T(n)= C 2^{n^k}$ for some integer constants $C\geq1$, $k\geq1$, then there exists
  $U$ operating as it described above such that 
\begin{enumerate}
\item \label{time-bnd}
  $U$ produces the result of operation of $M$ on input $w$ in time
  $<2^{4n^k}$, where $ n = |w|$.
\item \label{mod-cond}
  The head $U$ visits the first blocks of zones only on steps~$t$
  that are divisible by~$3$. 
\end{enumerate}
\end{proposition}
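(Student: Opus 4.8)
The plan is to take the machine $U$ constructed above and verify the two items by bounding step counts, organising everything around the stage index, since the work of $U$ splits naturally into stages. For item~\ref{time-bnd} I would first locate $w$ in the length-lexicographic enumeration: as there are $2^{m+1}-1$ words of length at most~$m$, the index $j$ of $w$ satisfies $j<2^{n+1}$, so the zone for $w$ is opened at stage~$j$, and since $M$ runs on $w$ for at most $T(n)$ steps its result block is correct by the end of stage $S:=j+T(n)<2^{n+1}+C2^{n^k}$. It then remains to count the steps executed up to stage~$S$. A stage consists of traversing the tape once in each direction at a cost proportional to the current tape length $\ell(s)$, together with opening one fresh zone. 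The key estimate is $\ell(s)=O(s^2)$: if the reserved white space of a zone is filled only as the simulated head of $M$ actually advances, then zone~$i$ occupies $O(s-i+\log s)$ cells at stage~$s$, and summing over $i<s$ gives $\ell(s)=O(s^2)$ irrespective of~$k$. Summing the per-stage cost over $s\le S$ and amortising the shifts needed to enlarge zones yields $\widetilde O(S^3)$ steps, and since $S<(2+C)\,2^{n^k}$ this is below $2^{4n^k}$ for all large~$n$, which is item~\ref{time-bnd}.

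For item~\ref{mod-cond} I would keep the residue of the current step number modulo~$3$ in the finite control of $U$ and update it at every step. A first block is a single cell, so on each pass the head occupies it for exactly one step. Before the head enters a first block I let $U$ idle on the adjacent cell until the residue is~$0$, so that the head steps onto the block on a move divisible by~$3$ and leaves immediately; imposing the same discipline on the returning pass guarantees that the head occupies a first block only on steps divisible by~$3$. At most a bounded number of idle steps is inserted per pass, contributing $O(S^2)$ steps in total, which is absorbed into the bound of item~\ref{time-bnd}.

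The step I expect to be the main obstacle is the estimate $\ell(s)=O(s^2)$ together with its interaction with correctness: the space held by a zone must be large enough that the simulation of $M$ never overruns a neighbour, yet small enough that one sweep of all open zones stays within the target. The delicate point is to keep the amortised cost of enlarging a zone linear in the swept length, so that the overall count remains $\widetilde O(S^3)$ rather than a higher power of~$S$; this is what forces zones to be extended on demand instead of reserving the whole $T(\cdot)$ block at creation. One then has to check that this on-demand layout does not disturb the modular timing of item~\ref{mod-cond}, i.e.~that the extra idling never desynchronises the first-block visits.
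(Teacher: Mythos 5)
Your skeleton is the paper's: bound the index of $w$ by $2^{n+1}$, observe that the result is ready by stage $S=j+T(n)$, charge each stage the current tape length, and enforce the divisibility\nobreakdash-by\nobreakdash-3 condition with a counter modulo~3 and dummy moves near the block delimiters (this last part is essentially identical to the paper's argument and is fine). The genuine gap is exactly the step you single out as the main obstacle: the claim that on-demand growth of zones can be implemented by shifts whose total cost amortizes to $\widetilde O(S^3)$. No shifting-based implementation achieves this. In the worst case, at stage $s$ there are $\Theta(s)$ zones whose simulated configurations are still expanding (for $T(n)=C2^n$, zone $i$ can keep growing until about stage $(C+1)i$), so $\Theta(s)$ zones each need one fresh cell per stage. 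A unit enlargement of zone $i$ physically rewrites every cell to the right of zone $i$, and a single head whose finite control buffers only $O(1)$ symbols realizes a cumulative shift of $O(1)$ per sweep; hence the shifting work in stage $s$ is $\Theta(s)\cdot\Theta(\ell(s))=\Theta(s^3)$, i.e. $\Theta(S^4)$ overall. Batching (for instance doubling a zone's allocation whenever it fills) does not help: the total displacement of zone $i$ is fixed by its initial and final left-edge positions, and each unit of displacement rewrites all $\Theta(s-i)$ cells of its content, so summing content times displacement over all zones again gives $\Theta(S^4)$. Since $S\leq (C+2)2^{n^k}$, a bound of order $S^4$ yields only $\mathrm{const}\cdot\poly(n)\cdot 2^{4n^k}$, which does not prove the stated inequality $<2^{4n^k}$. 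So, as written, your proof of item~1 does not go through.

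The paper sidesteps the issue entirely: a zone is created with white space $T(|w_i|)$ already reserved on both sides, so no zone ever grows, no shift is ever performed, and a stage costs one sweep plus one zone creation; for $k=1$ this gives $O\bigl((2^{n+1}+T(n))\,T(n)^2\bigr)=O(2^{3n})<2^{4n}$. Your worry that full reservation ``breaks the bound'' is relevant only for $k\geq2$, where the zones created near stage $S$ belong to words of length about $n^k$ and receive white space $T(n^k)$ — there the paper's own displayed estimate silently uses $T(|w_s|)=O(T(n))$, which fails. But the repair is not shifting. One can keep full reservation and simply postpone creating the zone of $w_i$ until the stage number reaches $T(|w_i|)$: the tape length at stage $s$ then stays $O(s^2)$, nothing ever moves, the total work is $O(S^3)$, and the zone of $w$ still exists by stage $j+T(n)$, so its result is ready by stage $j+2T(n)$. (Alternatively, a zone that outgrows its allocation can be erased and re-created at the right end of the tape by re-simulating $M$ from scratch, with $U$'s head following $M$'s head at cost $O(1)$ per step; this relocates a zone at a price independent of the distance moved and, unlike shifting, genuinely attains $\widetilde O(S^3)$.) Either of these mechanisms would make your outline work; the amortization you postulate does not exist for the layout you describe.
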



\begin{proof}
  Recall that  operation of the machine $U$ is
  divided in stages. 

  During the stage $k$, the machine  $U$ moves along the tape from the
  left to the right and
  in each zone simulates the next step of operation   of $M$. At the end
  of the stage the machine $U$ writes a~fresh zone with the input $w_k$ and the
  initial configuration of $M$ on this input. The configuration is
  extended in both directions by white space of size  $T(n)$.

  At first, we show how to construct   a machine $U'$ satisfying the
  property~\ref{time-bnd}. More exactly, we explain how to construct a
  machine satisfying the following claims.

  {\bfseries\itshape Claim 1.} Updating a configuration of the simulated machine $M$ into a zone
  takes a time $O(S)$, where $S$ is the size of the zone.

    A straightforward way to implement the update is the following. The
    head of $U'$ scans the zone until it detects a symbol $(q,a)$ with
    $q>0$. It means that the head of the simulated machine~$M$ is over the
    current cell. Then $U'$ updates the neighborhood of the cell
    detected with respect to the transition function of~$M$. After
    that $U'$ continues a motion until it detects the next zone.

    If a machine $M$ finishes its operation on a~configuration written
    in the current zone, then additional actions should
    be done. The machine $U'$ should update the result block. For this
    purpose it returns to the left end of the zone, updates the result
    block and continues a motion to the right until it detects the next zone.

    So, each cell in the zone is scanned $O(1)$ times. The total time
    for update is $O(S)$.

  {\bfseries\itshape Claim 2.}   A fresh zone on the stage $k$ is
  created in time  $O(n^k T(n))$, where  $n= |w_k|$.

  Creation of the result block takes a time  $O(1)$.

  To compute the next input word the machine $U'$ copies the previous
  input into the second block of the fresh zone. The distance between
  positions of the second blocks is
  $4+|w_{k-1}|+2T(|w_{k-1}|)=O(T(n))$. Here we
  count three delimiters occuring between the blocks and use the
  assumption that $T(n)>n$. The machine $U'$ should copy at most $n$
  symbols. So, the copying takes a time $O(nT(n))$. 

  After that, the machine $U'$ computes the next word in the
  lexicographical order. It can be done by adding $1$ modulo
  $2^{|w_{k-1}|}$ to $\bin(w_{k-1})$, where $\bin(w) $ is the integer
  represented in binary by $w$ (the empty word represents~0).  It
  requires a time $O(n)$. If an overflow occurs, then the machine
  should write an additional zero. It also requires a time $O(n)$.

  To mark the third block in the fresh zone the machine $U'$ computes
  a binary representation of $T(n)$
  by a polynomial time algorithm using the second block as an input to
  the algorithm (thus, $n$ is given in unary). Then it makes $T(n)$
  steps to the right using the computed value as a counter and decreasing
  the counter each step. The counter should be moved along a tape to
  save a time. 
  The length of binary representation of $T(n)$
  is $O(n^k)$. So, each step requires $O(n^k )$ time and totally
  marking of $T(n)$ free space  requires $O(n^k T(n))$ time. 

  Then $U'$ copies the input word $w_k$ to the right of marked free
  space. It requires $O(n T(n))$ time. The first cell of the copied
  word should be  modified to indicate  the
  initial state of the simulated machine~$M$.  And, finally, it
  repeat the marking procedure to the right of the input. 

  The overall time is $O(n^k T(n))$.

  Let us prove the property~\ref{time-bnd} is satisfied by the
  machine~$U'$. Counting time in stages, the zone corresponding to an
  input word $w$ of length $n$ appears after $\leq 2^{n+1}$
  stages. After that the result of operation of $M$ appears after
  $\leq T(n)$ stages.

  Let $s=|w_k|$. At stage $k$ there are at most $2^{s+1}$
  zones. Updating the existing zones requires time $O(2^{s}(s+T(s)))$ due
  to Claim~1. Creation of a fresh zone requires time $O(s^k T(s))$ due
  to Claim 2. Thus, the overall time for a stage is 
  \[
  O\big(2^{s+1}(s+T(s)) +s^k T(s) \big) = O(2^sT(s)).
  \]

  Therefore, the result of operation of $M$ appears in time
  \[
  O\big((2^{n+1}+T(n)) T(n)^2\big) = O(2^{3n^k})< 2^{4n^k}
  \]
  for sufficiently large~$n$.

  Now we explain how to modify the machine $U'$ to satisfy the
  property~\ref{mod-cond}. Note that the result block of a zone is
  surrounded by delimiters: \# to the right of it and either
  $\triangleleft$ or $\diamond$ to the left. 

  We enlarge the state set of $U'$ adding a counter modulo~3. It is
  increased by $+1$ each step of operation. If the head of a modified machine
  $U$ is over the $\triangleleft$ or $\diamond$ and $U'$ should go  to the
  right, then the machine $U$ makes dummy moves in the opposite
  direction and back to ensure that it visits the cell to the right on
  a~step $t$ divisible by 3. In a similar way the machine $U$
  simulates the move to the left from the cell carrying the delimiter \#.
\qed\end{proof}

\section {Proofs of the Main Theorems}\label{sec:proofs}

\begin{proof}[of Theorem~\ref{th:main}]
  Time hierarchy theorem~\cite{Sipser} implies that
  $\DTIME(2^{n/2})\subset \DTIME(2^{ n})$. Take a language $L\in
  \DTIME(2^{ n})\sm\DTIME(2^{n/2})$. For some constant  $C$ there
  exists a Turing machine  $M$ recognizing  $L$ such that $M$  makes at most
  $ T(n)=C2^n$ steps on inputs of length~$n$. 

  Apply the construction
  from Section~\ref{sec:parallel} and Proposition~\ref{goodU} to
  construct  the machine~$U$. Then convert $U$ into 2CA $C^{(2)}_U$ as
  it described in Section~\ref{sect:TM->CA}. We put an additional
  requirement on the bijection $\pi$, namely, $\pi(0,(0,1))=L-1$. 
  It locates the result of computation of $M$  in the  third
  bit of the encoding of the result block.

  Finally, construct
  $O(1)$-dimensional MSG $\D_C$ as it described in
  Section~\ref{sect:2CA->MSG}. The dimension $2N+2$ of the game is determined
  by the machine~$M$.

  Due to Corollary~\ref{CA->SG} the symbol $c(t,u)$ on the tape of
  $C^{(2)}_U$  equals the value of position
  $(Nt+u,Nt-u,0,0,\dots,0,1)$ of the game. 

  Suppose that we have an algorithm $\A$ to solve the game $\D_C$ in
  time $T_\A(m)$.

  Consider the following  algorithm recognizing $L$. 

  On an~input $w$ of
  length~$n$ do:
  \begin{enumerate}
  \item\label{first-step} Compute the number $k$ of the~zone
  corresponding to the input~$w$ of length $n$.
  \item Compute the position $u$ of the bit carrying the result of
  computation of $M$ on input~$w$ in the image of the result block of
  the zone~$k$.
   \item\label{last-step} Set $t=2^{4n}$. 
   \item Apply the algorithm $\A$ to compute the value of the position
   $$(Nt+u,Nt-u,0,0,\dots,0,1)$$ of the game and return the result. 
  \end{enumerate}

  Correctness of the algorithm is ensured by previous constructions
  and by the property~\ref{mod-cond} of Proposition~\ref{goodU}.
  The latter guarantees that at moment $t=2^{4n}$
  the head of $U$ is not on the result block. Thus the third bit of
  the encoding of the block is $1$ iff $M$ accepts $w$.

  It can be easily verified (see
  Proposition~\ref{calculations} below) that
  the first two steps of the algorithm can be done in time $\poly (n)$  and
  $u=O(2^{3n})$. The property~\ref{time-bnd} of
  Proposition~\ref{goodU} ensures that $U$ produces the result of $M$
  on the input $w$ in time $<2^{4n}$ for sufficiently large~$n$.

  Thus, the total running time of the algorithm is at most $\poly
  (n)+T_\A(5n) $.  But by choice of $L$ it is $\Omega (2^{n/2})$. We
  conclude that $T_\A(m)=\Omega (2^{m/11})$.
\qed\end{proof}

To complete the proof of Theorem~\ref{th:main}, we provide the proof
of a technical claim made.

\begin{proposition}\label{calculations}
  The first two steps of the algorithm in the proof of
  Theorem~\ref{th:main} can be done in time $\poly (n)$ for
  $T(n)=C2^{n^k}$ and $u=O(2^{3n})$ if $T(n)= C2^n$.
\end{proposition}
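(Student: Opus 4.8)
The plan is to reduce both quantities to closed-form arithmetic on binary integers whose length is polynomial in~$n$ (for the fixed exponent~$k$), using the explicit tape layout of~$U$ from Section~\ref{sec:parallel}. First I would compute the zone number~$k$. Since binary words are ordered by length and then lexicographically, the words strictly preceding a word~$w$ of length~$n$ are exactly the $\sum_{m=0}^{n-1}2^m = 2^n-1$ words of smaller length together with the $\bin(w)$ words of length~$n$ of smaller lexicographic rank; hence $k = 2^n+\bin(w)$, whose binary representation is simply the string~$1w$. Thus $k$ is produced in time $O(n)$ and has $n+1$ bits, which also reproves the bound $k\le 2^{n+1}$ used in Proposition~\ref{goodU}.

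Next I would compute the bit position~$u$. On the tape of~$U$ the zones are laid out in word order from left to right, and a zone for a word of length~$m$ occupies $S(m)=2m+2T(m)+O(1)$ cells: the one-cell result block, the length-$m$ input block, the configuration block of size $m+2T(m)$, and the three delimiters. Because the result block is the first block of its zone, the cell index of the result block of zone~$k$ is
\[
u_{\mathrm{cell}} = O(1)+\sum_{m=0}^{n-1}2^m S(m)+\bin(w)\,S(n),
\]
where $2^m$ and $\bin(w)$ count the preceding zones of each length. Passing from $C'_U$ to the binary automaton $C^{(2)}_U$ multiplies every cell index by the fixed block length $L+2$ and adds the within-symbol offset of the result bit, a constant fixed by the requirement on~$\pi$, so $u=(L+2)\,u_{\mathrm{cell}}+O(1)$.

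It remains to evaluate these expressions and bound the time. For general $T(n)=C2^{n^k}$ each summand $2^m S(m)$ is an integer of $O(n^k)$ bits computable in $\poly(n)$ time, and summing the $n$ of them costs $\poly(n)$, so $k$ and $u$ are obtained in time $\poly(n)$. When $T(n)=C2^n$ the closed forms $\sum_{m<n} m2^m=(n-2)2^n+2$ and $\sum_{m<n}4^m=(4^n-1)/3$ apply; the dominant contributions $\sum_{m<n}2^m\cdot 2C2^m=O(4^n)$ and $\bin(w)\cdot 2C2^n=O(4^n)$ then give $u_{\mathrm{cell}}=O(4^n)$ and $u=O(4^n)\subseteq O(2^{3n})$, as claimed. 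I expect no conceptual obstacle here; the only care needed is the exact accounting of delimiters and white space in $S(m)$, the constant encoding factor $L+2$, and checking that the exponential sums never produce operands longer than $\poly(n)$ bits, so that both the $\poly(n)$ running time and the bound $u=O(2^{3n})$ follow.
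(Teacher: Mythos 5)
Your proposal is correct and follows essentially the same route as the paper's proof: the same formula $k=2^n+\bin(w)$, the same zone-size accounting $4+2\ell+2T(\ell)$ summed over the $2^\ell$ preceding zones of each length plus the $\bin(w)$ preceding zones of length $n$, the same poly-many-operations-on-poly-bit-integers argument, and the same final bound (your geometric-series evaluation even gives the slightly sharper $u=O(4^n)$, versus the paper's cruder term-by-term estimate). The only cosmetic divergence is that you multiply by the true per-symbol encoding length $L+2$ where the paper writes $L$; both are $O(1)$, so nothing changes.
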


\begin{proof}
  For the first step, note that $k= 2^{n}+\bin(w)$. Indeed, there are $2^{n}-1$
shorter words, all of them precede $w$ in the ordering of binary words
we use. Also there are exactly $\bin(w)$ words of length $n$ preceding the
word~$w$. The formula for $k$ follows from these observations (note
that we count words starting from~1).

It is quite obvious now that $k$ is computed in polynomial time.

For the second step, we should count the sizes of zones preceding the
zone for $w$ and add a~constant to take into account delimiters. 
Let count the size of a zone including the delimiter to the left of
it. Then the size of a~zone for an input word of length $\ell$ is
\[
1+1+1+\ell+1+\ell+2T(\ell) = 4+2\ell+2T(\ell).
\]
There are $2^\ell$ words of length $\ell$. Thus, 
the total size of the zones preceding the zone of $w$ is 
\[
S = \sum_{\ell=0}^{n-1} 2^\ell (4+2\ell+2T(\ell)) + \bin(w) (4+2n+2T(n)) + 2
\]
For $T(n) = C2^{n^k}$ this expression can be computed in polynomial
time in $n$ by a straightforward procedure (the expression above has
$\poly(n)$ arithmetic operations and the results of these operations
are  integers represented in binary
by $\poly(n)$ bits). 

Thus, the result block of the zone of $w$ is $S+1$ (the delimiter to
the left of the zone adds~1). 

To compute $u$ we should multiply $S+1$ by $L=O(1) $ (the size of
encoding) and add~3 (because the third bit indicates the result of
computation of the simulated machine~$M$).

All these calculations can be done in polynomial time. 
If $T(n)=C2^n$, then we upperbound $u$ as follows
\[
u\leq L\big(n 2^n (4+2n+2C2^n) + 2^n (4+2n+2C2^n) + 3\big)+3 =
O(2^{3n}).
\]
\qed\end{proof}

\begin{proof}[of Theorem~\ref{th:aux}]
  Take a $\PSPACE$-complete language $L$ and repeat arguments from the
  previous proof using an upper bound $T(n)=C 2^{n^k}$  of the
  running time   of a machine $M$ recognizing $L$. The bound
  follows from the standard counting of the number of configurations in an
  accepting computation using polynomial space. 

  At the step~\ref{last-step} set $t =2^{4n^k}$. It gives a~polynomial
  reduction of $L$ to $\PW(\D_C)$: $w\mapsto(Nt+u,Nt-u,0,0,\dots,0,1)
  $.
\qed\end{proof}

\bibliographystyle{splncs03}
\bibliography{games}
\end{document}